\newtheorem{theorem}{Theorem}[section]
\newtheorem{proposition}[theorem]{Proposition}
\theoremstyle{definition}
\newtheorem{definition}[theorem]{Definition}
\newtheorem{example}[theorem]{Example}
\title{Physical Computation, $P/poly$ and $P/log\star$}
\author{Richard Whyman
\institute{The University of Leeds,\\
Leeds, UK}
\email{mmrajw@leeds.ac.uk}
}
\begin{document}
\mathcode`@="8000 
{\catcode`\@=\active\gdef@{\mkern1mu}}
\maketitle
\setcounter{tocdepth}{2}
\newcounter{ThmNum}
\newcounter{ThmNum2}
\newcounter{ThmNum3}
\begin{abstract} 
In this paper we give a framework for describing how abstract systems can be used to compute if no randomness or error is involved. Using this we describe a class of classical ``physical" computation systems whose computational capabilities in polynomial time are equivalent to $P/poly$. We then extend our framework to describe how measurement and transformation times may vary depending on their input. Finally we describe two classes of classical ``physical" computation systems in this new framework whose computational capabilities in polynomial time are equivalent to $P/poly$ and $P/log\star$.
\end{abstract}

\section{Introduction}
To answer the question of exactly how much various physical systems are capable of computing, we must first have a good abstract description of them, balancing simplicity with descriptive power. This paper discusses such a description.

In \cite{WhenDoesAPhysicalSystemCompute} Horsman \emph{et al.} argue convincingly that, in general, we cannot make use of a physical system for computation unless we have a way of interacting with it that allows us to predict the nature of its output given a particular input. Building on this idea, we will define the systems we wish to compute from in terms of how we can interact with them.

In \cite{BeggsNewtonian} Beggs \emph{et al.} describe how a Newtonian kinematic system can be used to tackle a problem that's uncomputable for a Turing machine; computing the characteristic function of any given subset of $\mathbb{N}$. Similarly, they achieve oracle-like results using  ``experiments" consisting of either a precise set of scales \cite{beggs2013oracles}, or of a cannon and a wedge \cite{beggs2012impact}, calling a Turing machine combined with such classical physical experiments an ``analogue-digital device".

A key aspect discussed in Beggs \emph{et al.}'s later papers is that the time taken for their examples of oracle-like queries to be carried out must depend on what is being queried, thereby restricting the speed with which certain problems can be solved. This culminated in Beggs \emph{et al.}'s ``analogue-digital Church-Turing thesis" \cite{beggs2014analogue}, which states: ``no possible abstract analogue-digital device can have more computational capabilities in polynomial time than $BBP//log\star$." Though this thesis was very well justified, Beggs \emph{et al.} (perhaps wisely) avoided giving a formal mathematical description of these analogue-digital devices. Here we shall attempt to give such a description. However, whilst in the thesis the analogue-digital devices always have finite, possibly unbounded, precision to their actions, due to complications that arise from exactly how this error is treated,\footnote{For example, how do we describe the outcome of applying a transformation $T$ with an error bound of $\epsilon$? If we take it to be a probability distribution then what distributions and probability measures are appropriate?} 
we shall avoid defining inexact systems and instead focus here on giving a robust framework for describing computation on systems without any error or randomness, with the hope that this will eventually lead to a more general and inclusive framework. 

In \cite{beggs2014analogue} Beggs \emph{et al.} suggested that the class of problems solvable by analogue-digital devices with infinite precision (and therefore without error) in polynomial time is likely $P/log\star$ and\footnote{This notation is explained in Definitions~\ref{NUCC} and~\ref{NUCC2}.} at most $P/poly$. In this paper we prove that the $P/poly$ result is true for a restricted class of error-free systems that we believe are physically reasonable. We then apply an additional restriction to these systems to obtain the $P/log\star$ result. 

\section{Computation Systems}
The goal of almost any task is to obtain something from something else. Hence we shall build up our framework for describing physical computation by beginning with an abstract description of a system that can manipulate some space via a fixed set of operations and observe it via a fixed set of measurements\footnote{Note that these measurements will differ from quantum measurements in the sense that what is being observed is not necessarily altered by the measurement.}.

\begin{definition}
	A \textsl{computation system} is a quadruple $C = (X,\Pi,\mathcal{T},x_0)$ where:
	\begin{itemize}
	\item $X$ is a non-empty set,
	\item $\Pi$ is a finite non-empty set of finite partitions\footnote{A partition $\alpha$ of a set $X$, is a set of disjoint subsets of $X$ such that $\bigcup_{A \in \alpha} A = X$. It is finite if $|\alpha|$ is finite.} of $X$,
	\item $\mathcal{T}$ is a finite non-empty set of transformations $T: X \rightarrow X$ and
	\item $x_0 \in X$.
	\end{itemize}
\end{definition}
The idea behind this quadruple is as follows. The set $X$ describes the possible configurations that a device acting upon the system can be in. 
The set $\Pi$ describes the set of measurements that can be performed in a single step by such a device without altering its configuration. 
The set $\mathcal{T}$ describes the set of actions that the device can perform in a single step to alter its configuration. The point $x_0$ is the configuration that the device begins in.

A finite partition of $X$ can be regarded as describing measurement of it, as any measurement of $X$ is essentially a process assigning a particular value to each element of $X$. In other words, as in rough set theory \cite{komorowski1999rough} we have some attribute valuation $V: X \rightarrow \{1,\hdots,n\}$. If we then define an equivalence relation $\sim$ on $X$ by letting $x_1 \sim x_2$ iff $V(x_1) = V(x_2)$, then the equivalence class generated by this is a finite partition on $X$. So if $\alpha = X/\!\!\sim$ and $A \in \alpha$ then $A = \{x \in X \text{ } | \text{ } V(x) = a\}$ for some $a \in \{1,\hdots,n\}$.

\begin{example}
	Let $\mathcal{A}$ be an alphabet, and $\mathbf{B} \not\in \mathcal{A}$ be the blank tape symbol. So let $\mathcal{A}_\mathbf{B} = \mathcal{A} \cup \{\mathbf{B}\}$, then $\mathcal{A}_\mathbf{B}^\mathbb{Z}$ is the set of bi-infinite strings from $\mathcal{A}_\mathbf{B}$ and the set of possible Turing tape configurations for a Turing machine with tape alphabet $\mathcal{A}$ is: $$\mathcal{A}^{TT} = \bigcup_{m,n \in \mathbb{Z}}\{(x_i)_{i \in \mathbb{Z}} \in \mathcal{A}_\mathbf{B}^\mathbb{Z} \text{ } | \text{ } x_i = \mathbf{B} \text{ if } i < m \text{ or } i > n\}.$$
	For any $a \in \mathcal{A}_\mathbf{B}$, let $\langle a \rangle = \{(x_i)_{i \in \mathbb{Z}} \in \mathcal{A}^{TT} \text{ } | \text{ } x_0 = a\}$ then the tape reading partition is: $$R(\mathcal{A}) = \{\langle a \rangle \text{ } | \text{ } a \in \mathcal{A}_\mathbf{B}\}.$$ 
	Define $\lhd$ and $\rhd$ to be the left and right shifts of $\mathcal{A}^{TT}$ respectively and for each $a \in \mathcal{A}_\mathbf{B}$ let $r_a$ be the ``replace with $a$" operation, that is if $(w_i)_{i \in \mathbb{Z}} = r_a((x_i)_{i \in \mathbb{Z}})$ then $w_0 = a$ and $w_i = x_i$ if $i \in \mathbb{Z} \setminus \{0\}$. The set of Turing tape transformations on $\mathcal{A}^{TT}$ is then: $$W(\mathcal{A}) = \{I_{\mathcal{A}^{TT}},\lhd,\rhd\} \cup \{r_a \text{ } | \text{ } a \in \mathcal{A}_\mathbf{B}\},$$ where $I_\mathcal{A}^{TT}$ denotes the identity map on $\mathcal{A}^{TT}$. Finally, let $\mathbf{B}^\mathbb{Z} = (b_i)_{i \in \mathbb{Z}}$ where $b_i = \mathbf{B}$ for all $i \in \mathbb{Z}$.

	Then a Turing machine with an alphabet of $\mathcal{A}$ can be described as the computation system $TM(\mathcal{A}) = (\mathcal{A}_\mathbf{B}^\mathbb{Z},\{R(\mathcal{A})\},W(\mathcal{A}),\mathbf{B}^\mathbb{Z})$.
\end{example}

We can also describe an oracle Turing machine with oracle $A \subseteq \mathcal{A}^*$ as a computation system by taking a Turing machine with two tapes and giving the second an additional transformation $O_{A,a}: \mathcal{A}_\mathbf{B}^\mathbb{Z} \rightarrow \mathcal{A}_\mathbf{B}^\mathbb{Z}$. Where for some element $a \in \mathcal{A}$, the transformation $O_{A,a}$ changes the symbol the second tape head is pointing at to $a$ if the finite string of non-blank tape symbols to the right of the head is a word in $A$, otherwise it changes it to $\mathbf{B}$.

We shall allow a device acting on $C$ to also have access to a Turing machine, rather than acting solely on $C = (X,\Pi,\mathcal{T},x_0)$. This is because the computational power of a computation system be would severely, and arguably, unnecessarily restricted by an inability to perform operations that are trivial to a Turing machine. For example recording or copying the information obtained during a computation. Further, we want the inputs into our system to be finitely knowable objects, such as those described by finite words, so the Turing tape is where the inputs can be put in to our device.

As we shall see below, the manner in which we describe how a device acts upon a computation system is similar to how a Turing machine acts upon a Turing tape. Its actions are defined by a set of rules whose implementation is also dependant on an set of internal states.

\begin{definition}
	A \textsl{set of states} is a finite set $S$ containing at least three elements $s_0$, $s_a$ and $s_r$, called the \textsl{initial}, \textsl{accepting} and \textsl{rejecting} states respectively.
\end{definition}

The initial state $s_0$ is the internal state that a device always starts in. For any computation the device will always halt if it reaches either $s_a$ or $s_r$.

\begin{definition}
	Let $\mathcal{A}$ be an alphabet and $S$ be a set of states. An \textsl{$\mathcal{A}^*$-program} on $C = (X,\Pi,\mathcal{T},x_0)$ and $S$ is a finite set of rules $Q$ which describes how a device acting on $C$ and a Turing machine with tape alphabet $\mathcal{A}$ behaves. Each rule takes the form $(s_i,\alpha,A,s_j,T)$ where $s_i \in S \setminus \{s_a,s_r\}$, $s_j \in S$, $\alpha \in \Pi \cup \{R(\mathcal{A})\}$, $A \in \alpha$ and $T \in \mathcal{T} \cup W(\mathcal{A})$, this rule can be read as ``if the device is in state $s_i$, perform measurement $\alpha$, then if it is at a point in the subset $A$ go to state $s_j$ and perform action $T$". 
\end{definition}

Mathematically this means the following. Suppose that the device is at the configuration $x \in X$ with $w \in \mathcal{A}^{TT}$ written on its tape and an internal state of $s_i$, then if there is some rule in $Q$ beginning with $(s_i,\alpha)$ the device then performs an $\alpha$ measurement. There are two cases; if $\alpha \not= R(\mathcal{A})$ then if it is the case that $x \in A$ and there is a rule $(s_i,\alpha,A,s_j,T) \in Q$ then this rule is applied.
If instead $\alpha = R(\mathcal{A})$, then if it is the case that $w \in B$ and there is a rule $(s_i,R(\mathcal{A}),B,s_k,U) \in Q$ then this rule is applied. In either of the above cases, if no appropriate rule exists then the internal state becomes $s_r$. If the internal state becomes either $s_a$ or $s_r$ then the device halts.

Applying the rule $(s_i,\alpha,A,s_j,T)$ to a device at $(x,w,s_i)$ results in it becoming $(T(x),w,s_j)$ if $T \in \mathcal{T}$ and $(x,T(w),s_j)$ if $T \in W(\mathcal{A})$.

In order for the above process to be deterministic we require that all rules beginning with the same internal state must have the same partition, so for any $(s_i,\alpha,A,s_j,T),(s_k,\beta,B,s_l,U) \in Q$, if $s_i = s_k$ then $\alpha = \beta$. We also require that if two rules begin with the same state, partition and subset, then they must also end with the same state and transformation so for any $(s_i,\alpha,A,s_j,T),(s_k,\beta,B,s_l,U) \in Q$, if $(s_i,\alpha,A) = (s_k,\beta,B)$ then $(s_j,T) = (s_l,U)$.

\begin{definition}
	A device implementing an $\mathcal{A}^*$-program $Q$ on a computation system $C$ takes as its input a word $w \in \mathcal{A}^*$, written onto its Turing tape as the configuration $w^\dagger$. It then repeatedly applies the rules of $Q$ to $(x_0,w^\dagger,s_0)$ until it either reaches the internal state $s_a$ and ``accepts" $w$ or it reaches the internal state $s_r$ and ``rejects" $w$. If $w$ is accepted we write $\varphi^C_{Q}(w) = \mathbb{T}$ and if it is rejected we write $\varphi^C_{Q}(w) = \mathbb{F}$. Otherwise, if neither $s_a$ nor $s_r$ is ever reached, then the computation never ends and $\varphi^C_{Q}(w)$ is undefined.
\end{definition}

\begin{definition}
	Let $C = (X,\Pi,\mathcal{T},x_0)$ be a computation system. A subset\footnote{Which we shall sometimes refer to as a problem.} $A \subseteq \mathcal{A}^*$ is \textsl{computable using $C$} if there exists an $\mathcal{A}^*$-program $Q$ on $C$ such that for any $w \in \mathcal{A}^*$:
	\begin{align*}
	\varphi^C_{Q}(w) = \mathbb{T} &\iff w \in A, \\
	\varphi^C_{Q}(w) = \mathbb{F} &\iff w \not\in A.
	\end{align*}
\end{definition}

It is worth noting that the partitions of $\Pi$ do not fully describe the information a device acting on $C = (X,\Pi,\mathcal{T},x_0)$ can extract from $X$, as the device can also transform $X$ by some $T \in \mathcal{T}$. Indeed if we know that $x \in A$ and $T(x) \in B$ then we also know that $x \in A \cap T^{-1}(B)$.

\begin{example}\label{NES}
	Consider the computation system of the form $C_\phi = ([0,1),\{\alpha\},\{T\},\phi)$, where we have $\alpha = \{[0,\frac{1}{2}),[\frac{1}{2},1)\}$ and $T(x) = 2x - \lfloor 2x \rfloor$ for any $x \in [0,1)$. We can then compute the binary expansion of the number $\phi$ to arbitrarily many places using $C_\phi$.
	To do this, take an $\{0,1\}^*$-program $Q$, which firstly takes an $\alpha$ measurement of the starting number $\phi$. Either $\phi \in [0,\frac{1}{2})$ and the first binary digit of $\phi$ is $0$, or $\phi \in [\frac{1}{2},1)$ and the first binary digit of $\phi$ is $1$. So $Q$ records this result on the Turing tape before applying the transformation $T$ to $\phi$, which effectively deletes its first binary digit. This process can then repeated, with $Q$ taking an $\alpha$ measurement of $T(\phi)$ to obtain the second digit of $\phi$, and so on.
\end{example}

When defining the set of transformations on a computation system we could have chosen for our transformations to be continuously applicable, meaning that the value of $T^z(x)$ varies continuously with $z \in [0,\infty)$ and $T^{z_2} \circ T^{z_1}(x) = T^{z_1+z_2}(x)$ for $z_1,z_2 \in [0,\infty)$. We consider our devices to be acting in conjunction with Turing machines, as these work in discrete time we can only really consider transformations that are implemented discretely. However objects in our own world can only be altered through continuous processes, to account for this in such a scenario we take $T(x)$ to be the result of continuously applying a process given by $T$ to $x$ for a fixed multiple of a single Turing machine time step. As multiplying the time taken by a constant factor does not change the usual complexity classes we have the following notion for the computation time of computation systems.

\begin{definition}
	A \textsl{time function} for an $\mathcal{A}^*$-program $Q$ on a computation system $C = (X,\Pi,\mathcal{T},x_0)$ is a function $t: \mathbb{N} \rightarrow \mathbb{N}$ such that for any $w \in \mathcal{A}^*$ inputted into the device, the number of times we apply the rules of $Q$ before the computation halts is bounded by $t(|w|)$. Clearly, if $\varphi^C_Q(v)$ is undefined for some $v \in \mathcal{A}^*$ then $Q$ does not have a time function.

	We say that a problem $A \subseteq \mathcal{A}^*$ is \textsl{computable using $C$ in polynomial time} if there exists an $\mathcal{A}^*$-program $Q$ on $C$ with a polynomial time function.
\end{definition}

Computing $n$ digits of a number $\phi$ using the computation system $C_\phi$ in Example~\ref{NES} takes at least $n$ rule applications. There are $2^k$ words of length $k$ in $\{0,1\}^*$, so whilst we could encode the characteristic function of a arbitrary set $A \subseteq \{0,1\}^*$ in $\phi$, computing the membership of $A$ using $C_\phi$ would in general be impossible in polynomial time.

However, restricting computation systems to being only able to solve problems in polynomial time is not, in general, a restriction at all. All oracle Turing machines are examples of computation systems, hence the class of all problems computable by computation systems in polynomial time is just the class of all problems.

\section{Classical Physical Computation Systems and $P/poly$}
\setcounter{ThmNum2}{-1}
Suppose we have some system of objects. Classical physics each of these objects possess a set of quantities (position, momentum, velocity etc.) which can each be described by some real number in some particular dimension. The dimension of any physical quantity is expressed as a product of the basic physical dimensions (length, mass, time etc.) each raised to an integer power. It does not make sense to take the exponent of a physical quantity but we can multiply and divide by arbitrary physical quantities. Adding together two physical quantities is possible if they have the same dimension. We can also take the $n$th root of a physical quantity if its basic physical dimensions are all multiples of $n$.

In order to manipulate the objects of a classical physical system we manipulate their physical quantities. This suggests that the transformations we are able to perform on a classical physical system must be able to send physical quantities to physical quantities, and thus must be built up from some finite composition of additions, multiplications and rational powers.

\begin{definition}\label{PTD}
	A \textsl{multi-variable polynomial function with rational powers} on $\mathbb{R}^m$ is a function $F: \mathbb{R}^m \rightarrow \mathbb{R}$ such that for some fixed $I \in \mathbb{N}$, $r_1,\hdots,r_I \in \mathbb{R}$ and $q_{11},\hdots,q_{Im} \in \mathbb{Q}$: $$F(x_1,\cdots,x_m) = \sum_{i=1}^I r_i \prod_{j=1}^{m}x_j^{q_{ij}},$$ for any $(x_1,\cdots,x_m) \in \mathbb{R}^m$. We call the numbers $r_1,\hdots,r_I$ the coefficients of $F$. In the cases where $q_{ij} = \frac{a}{b}$ and $b \not= 1$ the function $f(x) = x^{q_{ij}}$ may either have two real roots, in which case we take $x^{q_{ij}}$ to be the greatest of these roots, or it may have zero real roots, in which case we take $x^{q_{ij}}$ to be undefined. 

	We call a transformation $T:\subseteq \mathbb{R}^m \rightarrow \mathbb{R}^m$ is a \textsl{classical transformation} on $\mathbb{R}^m$ if there are multi-variable polynomial functions with rational powers $F_1,\hdots,F_m$ such that:
	$$T\begin{pmatrix}x_1 \\ \vdots \\ x_m\end{pmatrix} = \begin{pmatrix}F_1(x_1,\cdots,x_m) \\ \vdots \\ F_m(x_1,\cdots,x_m)\end{pmatrix},$$ 
	for any $x_1,\hdots,x_m \in \mathbb{R}$. We denote the set of classical transformations on $\mathbb{R}^m$ by $ClT^m$.
\end{definition}

Note that whilst transcendental functions such as $\cos$ and $\sin$ are not constructable as classical transformations, applying a rotation by an angle $\theta$ about the origin to any $(x,y) \in \mathbb{R}^2$ results in $(x\cos(\theta) - y\sin(\theta),x\sin(\theta) + x\sin(\theta))$. As the values of $\cos(\theta)$ and $\sin(\theta)$ are fixed such a map is a classical transformation. Indeed, the application of any $m \times m$ matrix is a classical transformation in $ClT^m$.

In order to measure the objects of a classical physical system we are similarly restricted to measuring their physical quantities, which can only be done to some finite degree of accuracy. Knowing that a particle is at a position $\mathbf{x}$ to within an error of $\epsilon$ means knowing it is within the open ball of radius $\epsilon$ centred at $\mathbf{x}$. Knowing that a particle is within a set $U$ as well as a set $V$ means knowing that it is in the set $U \cap V$. Applying a classical transformation $T$ to $\mathbf{x}$ before measuring that it is within the set $A$ and then applying its inverse means knowing this quantity is within the set $T^{-1}(A)$.

\begin{definition}
	For any $\mathbf{x} \in \mathbb{R}^m$ and any $\epsilon > 0$, in the Euclidean metric we denote the open ball and closed balls of radius $\epsilon$ centred at $\mathbf{x}$ as $B_\epsilon(\mathbf{x})$ and $\overline{B}_\epsilon$ respectively.
	Let $ClM^1_0 = \{B_\epsilon(\mathbf{x}),\overline{B}_\epsilon(\mathbf{x}) \text{ } | \text{ } \mathbf{x} \in \mathbb{R}, \epsilon \in (0,\infty)\}$ then we define $ClM^m$ inductively on $\mathbb{N}$ as follows:
	\begin{itemize}
	\item $ClM^m_0 = \{B_\epsilon(\mathbf{x}),\overline{B}_\epsilon(\mathbf{x}),U \times V \text{ } | \text{ } \mathbf{x} \in \mathbb{R}^m, \epsilon \in (0,\infty) \text{ and } U \in ClM^{l}, V \in ClM^{m-l} \text{ for some } l  < m\}$,
	\item $ClM^m_{k+1} = \{U \cup V, U \cap V, \mathbb{R}^m \setminus U, T^{-1}(U) \text{ } | \text{ } U,V \in ClM^m_{k} \text{ and } T \in ClT^m \text{ is invertible}\}$,
	\item $ClM^m = \bigcup_{k \in \mathbb{N}} ClM^m_{k}$.
	\end{itemize}
	A subset $X \subseteq \mathbb{R}^m$ is then called \textsl{classically measurable} if $X \in ClM^m$. A finite partition $\alpha$ of $X$ is a  \textsl{classically measurable partition} if every $A \in \alpha$ is in $ClM^m$.
\end{definition}

Rather than always having to apply the same classical transformation to every of the element of $X$, since we are able to determine which element of a classically measurable partition $\alpha$ a point $x \in X$ is in, we are surely capable of applying a different classical transformation for each partition element.

\begin{definition}
	A transformation $T: X \rightarrow X$ is \textsl{classically constructable} if there is some classically measurable partition $\alpha$ of $X$, such that for any $A \in \alpha$, $T|_A \in ClT^m$ and dom$(T) = X$.
\end{definition}

\begin{definition}
	A \textsl{classical physical computation system} (CPCS) is a computation system of the form $C = (X,\Pi,\mathcal{T},x_0)$ where:
	\begin{itemize}
	\item $\Pi$ contains only classically measurable partitions and
	\item $\mathcal{T}$ contains only classically constructable transformations.
	\end{itemize}
\end{definition}

Note that the above conditions imply that $X$ must itself be a classically measurable subset of $\mathbb{R}^m$ for some $m \in \mathbb{N}$, since it is equal to the finite union of classically measurable sets. Denote the class of problems computable using a CPCS in polynomial time by $P_{CPCS}$.

The following example shows that both of the above conditions on CPCS's are necessary for $P_{CPCS}$ to not be just the class of all problems.

\begin{example}
	We can compute any problem $A \subseteq \{0,1\}^*$ in polynomial time using the computation system $C_1 = (\mathbb{R},\{\alpha_A\},\{p,t\},0)$ where: $$\alpha_A = \{x \in \mathbb{R} \text{ } | \text{ } \text{the binary representation of } \lfloor x \rfloor \text{ is } 1w \text{ for some } w \in A\},$$ which is not necessarily classically measurable. The transformations in $C_1$ are $p(x) = x + 1$ and $t(x) = 2x$ for any $x \in \mathbb{R}$, so they are clearly classically constructable. Now if the integer part of a number $y \in \mathbb{R}$ has a binary representation $b_1b_2\cdots b_k$, then the binary representation of $\lfloor t(y) \rfloor$ is $b_1b_2\cdots b_k0$ and the binary representation of $\lfloor p(t(x)) \rfloor$ is $b_1b_2\cdots b_k1$. We can therefore input the word $w \in \{0,1\}^*$ into $\mathbb{R}$ as a number with a binary representation of $1w$ via a linear number of applications of $p$ and $t$ to 0. By applying the measurement $\alpha_A$ onto the resultant number we can determine whether $w \in A$.

	Alternatively consider the computation system $C_2 = (\mathbb{R},\{\{(-\infty,0),[0,\infty)\}\},\{p,t,T\},0)$ where $p$ and $t$ are the same as before and: $$T(x) = \left\{\begin{array}{ll} 1 & \text{if } x \in \alpha_A , \\ -1 & \text{otherwise}. \end{array}\right.$$ This is not necessarily classically constructable since $\alpha_A$ above is not necessarily a classically measurable partition. As above, we can determine whether $w \in A$ in linear time by inputting it into $\mathbb{R}$ as a number with a binary representation of $1w$ and then applying the transformation $T$ and the measurement $\{(-\infty,0),[0,\infty)\}$ to it. This works as $T^{-1}(\{(-\infty,0),[0,\infty)\}) = \alpha_A$.
\end{example}

We can relate $P_{CPCS}$ to the non-uniform complexity class $P/poly$, which is often defined in terms of boolean circuits, but can also be defined as follows.

\begin{definition}\label{NUCC}
	Let $\mathcal{F}$ be a class of functions of the form $f: \mathbb{N} \rightarrow \mathcal{A}^*$, we then call $\mathcal{F}$ an \textsl{advice class}. The non-uniform complexity class $P/\mathcal{F}$ is the class of all problems \textsl{computable in polynomial time with advice from $\mathcal{F}$}. That is, $A \in P/\mathcal{F}$, for any $A \subseteq \mathcal{A}^*$, if there exists $B \in P$ and an $f \in \mathcal{F}$ such that: $$w \in A \iff \langle w,f(|w|)\rangle \in B.$$ 
	We then say that $f$ is used to help compute $A$ in polynomial time.
\end{definition}

\begin{definition}\label{NUCC2}
	Let $f$ be an advice function, if there exists a polynomial function $p: \mathbb{N} \rightarrow \mathbb{N}$ such that $|f(n)| = O(p(n))$, then we say that $f$ is a \textsl{polynomially growing advice function}. We call the class of all polynomially growing advice functions $poly$.
	
	An advice function $g$ is a \textsl{logarithmically growing advice function} if $|g(n)| = O(\log(n))$. The class of all \textsl{logarithmically growing advice functions} is $log$.

	We call an advice function $h: \mathbb{N} \rightarrow \mathcal{A}^*$ a \textsl{prefix function} if for any $m < n$, the word $h(m)$ is a prefix word of $h(n)$. That is, for any $n \in \mathbb{N}$, by taking $h(n)$ we obtain a finite prefix word of some constant infinite sequence of symbols in $\mathcal{A}^\infty$. We denote this sequence by $h(\infty)$.

	Finally we denote the class of all \textsl{polynomially growing prefix advice functions} by $poly\star$ and the \textsl{class of all logarithmically growing prefix advice functions} by $log\star$.
\end{definition}

As shown in \cite{beggs2008oracles} that whilst $P/log \not= P/log\star$ it so happens that for computing in polynomial time restricting $poly$ to $poly\star$ does not change the non-uniform complexity class.

\begin{proposition}\label{PE}
	$P/poly = P/poly\star$.
\end{proposition}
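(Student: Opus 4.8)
The plan is to establish the two inclusions separately, noting that the containment $P/poly\star \subseteq P/poly$ is immediate: a polynomially growing prefix advice function is in particular a polynomially growing advice function, so $poly\star \subseteq poly$ and the inclusion of the induced complexity classes follows directly from Definition~\ref{NUCC}. All of the substance is in proving $P/poly \subseteq P/poly\star$.

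So suppose $A \in P/poly$, and fix $B \in P$ together with $f \in poly$ such that $w \in A \iff \langle w, f(|w|)\rangle \in B$; let $p$ be a nondecreasing polynomial with $|f(n)| = O(p(n))$. The obstruction to using $f$ directly is that the words $f(0), f(1), \ldots$ need bear no prefix relationship to one another. I would remove this by folding the entire initial segment into a single advice word. Fixing a self-delimiting encoding $\mathrm{enc}\colon \mathcal{A}^* \to \mathcal{A}^*$ (e.g.\ prefixing each word by a doubled binary description of its length together with an end-of-length marker, using two fixed symbols of $\mathcal{A}$), I set $h(n) = \mathrm{enc}(f(0))\,\mathrm{enc}(f(1))\cdots\mathrm{enc}(f(n))$. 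Since $h(n+1)$ is just $h(n)$ with $\mathrm{enc}(f(n+1))$ appended, $h(m)$ is a prefix of $h(n)$ whenever $m < n$, so $h$ is a prefix function; and $|h(n)| = \sum_{i=0}^{n} |\mathrm{enc}(f(i))| = O(n\,p(n))$ is polynomial, so $h \in poly\star$.

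It then remains to exhibit the polynomial-time set witnessing $A \in P/poly\star$. I would take $B'$ to be the set of pairs $\langle w, u\rangle$ for which $u$ parses as a concatenation $c_0 c_1 \cdots c_k$ of $\mathrm{enc}$-blocks with $k = |w|$ and $\langle w, c_k\rangle \in B$. On the honest advice $u = h(|w|)$ the final block $c_{|w|}$ is exactly $f(|w|)$, so $\langle w, h(|w|)\rangle \in B' \iff \langle w, f(|w|)\rangle \in B \iff w \in A$, as required. Parsing $u$ into its blocks and isolating $c_k$ can be done in time polynomial in $|u| = |h(|w|)|$, hence polynomial in $|w|$, and the concluding membership test inherits polynomiality from $B \in P$; thus $B' \in P$ and $A \in P/poly\star$.

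The one point demanding care is the encoding $\mathrm{enc}$: it must be self-delimiting so a polynomial-time parser can recover the blocks unambiguously, yet lightweight enough that the cumulative length stays polynomial (the bound above). A genuinely self-delimiting code needs at least two symbols of $\mathcal{A}$ to delimit blocks, so if one wishes to be exhaustive the degenerate unary case $|\mathcal{A}| = 1$ should be handled separately — there the advice carries only one bit of information per length and the claim is easy, e.g.\ by embedding that single bit into a strictly increasing (hence prefix) sequence via an offset such as $2n$ plus the membership bit. I expect this bookkeeping around the encoding to be the only mild obstacle; the remainder is a direct verification.
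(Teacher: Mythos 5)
Your proof is correct and takes essentially the same route as the paper: the prefix advice function is the concatenation of $f(0),\dots,f(n)$ with delimiters, has polynomially bounded length, and the block $f(|w|)$ is recovered from it by a polynomial-time parser. The only difference is the delimiting mechanism---the paper simply adjoins a fresh separator symbol $e \notin \mathcal{A}$ and sets $g(0)=f(0)$, $g(n+1)=g(n)ef(n+1)$, which avoids the self-delimiting encoding and thereby also sidesteps the unary-alphabet corner case you raise.
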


This follows from the fact that any advice function $f: \mathbb{N} \rightarrow \mathcal{A}$ which grows at a rate of $O(n^m)$ can be substituted for a prefix advice function $g: \mathbb{N} \rightarrow (\mathcal{A} \cup \{e\})^*$ defined recursively by $g(0) = f(0)$ and $g(n+1) = g(n)ef(n+1)$ for any $n \in \mathbb{N}$. This advice function then grows at a rate of $O(n^{m+1})$.

\begin{theorem}\label{T1}
	$P_{CPCS} = P/poly$.
\end{theorem}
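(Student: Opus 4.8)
The statement is an equality of two problem classes, so the natural strategy is to prove two inclusions separately: $P_{CPCS} \subseteq P/poly$ and $P/poly \subseteq P_{CPCS}$.

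For the easier direction, $P/poly \subseteq P_{CPCS}$, the plan is to take an arbitrary $A \in P/poly$ and build a CPCS that computes it in polynomial time. By Definition~\ref{NUCC} there is a set $B \in P$ and a polynomially growing advice function $f$ with $w \in A \iff \langle w, f(|w|) \rangle \in B$. By Proposition~\ref{PE} we may take $f$ to be a prefix advice function, so its values assemble into a single infinite sequence $f(\infty) \in \mathcal{A}^\infty$. The idea is to encode this infinite advice sequence as a single real number $\phi$ (its digit expansion in a suitable base), place $\phi$ as a coordinate of the starting configuration $x_0$, and use classical transformations together with classically measurable partitions to extract the prefix $f(n)$ one symbol at a time, exactly as in Example~\ref{NES}. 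The crucial quantitative point is that reading off $|f(n)| = O(p(n))$ symbols costs only polynomially many rule applications; once $f(|w|)$ has been written onto the Turing tape alongside $w$, the polynomial-time decision procedure for $B \in P$ can be carried out by the accompanying Turing machine. One must check that the ``read one digit, shift'' transformation is a genuine classical transformation (it is affine on each partition cell, as in Example~\ref{NES}) and that the partition used is classically measurable (a union of intervals, hence in $ClM^1$).

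For the harder direction, $P_{CPCS} \subseteq P/poly$, the plan is to take a CPCS $C = (X,\Pi,\mathcal{T},x_0)$ with $X \subseteq \mathbb{R}^m$ and an $\mathcal{A}^*$-program $Q$ with polynomial time function $t$, and to show the computed problem lies in $P/poly$. The governing observation is the remark following the computability definition: after a bounded run of the device, the only information extracted about the configuration is which cell of finitely many classically measurable sets it fell into, and each such membership test is a Boolean combination of conditions of the form ``$T(x) \in B_\epsilon$'' or ``$T(x) \in \overline{B}_\epsilon$'' for invertible classical transformations $T$. Since on an input of length $n$ at most $t(n)$ rules fire, a computation path is determined by a sequence of at most $t(n)$ measurement outcomes, and the whole behaviour of the device on all inputs of length $n$ depends only on which of polynomially many (in $n$) classically measurable regions the fixed point $x_0$ belongs to. The strategy is to record, as the advice string $f(n)$, the sequence of these measurement outcomes — equivalently, a polynomially long bit-string telling the simulating Turing machine the result of each classical measurement it would otherwise have to perform physically. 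Given this advice, a standard Turing machine can replay the program $Q$ deterministically, substituting each physical measurement by the corresponding advice bit, and thereby decide membership in polynomial time; this places the problem in $P/poly$.

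The main obstacle is the bookkeeping in this second direction: one must argue carefully that the number of distinct classically measurable regions whose membership matters, over all inputs of length $n$, is bounded by a polynomial in $n$, so that the advice is genuinely of polynomial length. The subtlety is that each rule application may compose a new classical transformation and intersect with a new ball, so the regions are nested preimages built up along the computation; bounding their number requires using that the time function is polynomial (hence at most $t(n)$ measurements per input) together with the fact that there are at most $2^{t(n)}$ possible outcome-sequences but only those realised by the single real configuration $x_0$ need be recorded. I expect the delicate step to be organising the advice so that it suffices uniformly for \emph{all} inputs of length $n$ while remaining polynomially bounded, and verifying that the Turing-machine simulation using this advice faithfully reproduces $\varphi^C_Q$.
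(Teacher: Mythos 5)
Your $(\supseteq)$ direction matches the paper's proof essentially step for step: pass to prefix advice via Proposition~\ref{PE}, encode $g(\infty)$ as a real number, put it in the initial configuration, and extract digits with the system of Example~\ref{NES}, checking that the doubling map is classically constructable and the partition classically measurable. That half is fine.

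The problem is the $(\subseteq)$ direction, and it sits exactly at the step you flag as ``delicate.'' You propose that the advice $f(n)$ record ``the sequence of measurement outcomes'' that the device would see, so that a Turing machine can replay $Q$ substituting advice bits for physical measurements. But advice may depend only on $n = |w|$, not on $w$, while the sequence of transformations applied to $X$ --- and hence the sequence of measurement outcomes --- is input-dependent: the program branches on tape readings, so across the $2^n$ inputs of length $n$ the device can realise many distinct transformation sequences, each with its own outcome sequence. The regions whose membership matters are the sets $(T_{i_k}\circ\cdots\circ T_{i_1})^{-1}(A)$ for $A$ a cell of some partition and $(T_{i_1},\ldots,T_{i_k})$ a realised sequence of length at most $t(n)$; a priori there are $|\mathcal{T}|^{t(n)}$ such sequences, and your claim that only ``polynomially many (in $n$) classically measurable regions'' are relevant is asserted, not proved. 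Nor does the remark that ``only outcome-sequences realised by $x_0$ need be recorded'' help: each input can realise a different sequence, so the total advice needed under your scheme can still be exponential in $n$.

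The paper closes this hole with a different kind of advice and one key lemma that your proposal is missing. The advice is not a list of outcomes but a description of the system itself: a finite word $v_C$ encoding how $C$ is built, together with finitely many signed binary digits of the finitely many reals $D_C$ (coefficients of the transformations, centres and radii of balls, coordinates of $x_0$) that define $C$. From sufficiently precise approximations of $D_C$ the simulating Turing machine can \emph{compute} every measurement outcome itself, for whatever input-dependent sequence of transformations arises. The lemma that makes this advice polynomially long is an order-independence argument: the precision (number of digits of elements of $D_C$) that a sequence of transformations can expose is governed by the expansion and contraction rates of their rational-power components, so it depends only on the \emph{multiset} of transformations applied, not on their order; and the number of multisets of size at most $p(n)$ drawn from the finite set $\mathcal{T}$ is $p(p(n),|\mathcal{T}|)$ in the paper's notation, which is polynomial in $n$. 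Hence polynomially many digits suffice to simulate every possible $p(n)$-step run uniformly over all inputs of length $n$. Without this lemma (or some substitute bounding the precision needed, rather than the number of outcome sequences), your simulation cannot be fed by polynomial-size advice, so the hard direction as you have set it up does not go through.
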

\begin{proof}
	$(\supseteq)$ Let $A \subseteq \{0,1\}^*$ be such that $A \in P/poly$, then by Proposition~\ref{PE}, we have $A \in P/poly\star$. Then let $g: \mathbb{N} \rightarrow \{0,1\}^*$ be a polynomially growing prefix advice function that can be used to help compute $A$, that is $|g(n)| \leqslant cn^a$ for some $c,N,a \in \mathbb{N}$ and any $n > N$.
	Define a number $\phi_g$ to have a binary expansion of $0.g(\infty)$, then consider a computation system of the form $C_{\phi_g}$, as in Example~\ref{NES}. This is a CPCS since the partition $\{[0,\frac{1}{2}),[\frac{1}{2},1)\}$ in $C_{\phi_g}$ is clearly classically measurable and the transformation $T$ it is a classically constructable since $T(x) = 2x$ if $x \in [0,\frac{1}{2})$, and $T(x) = 2x - 1$ if $x \in [\frac{1}{2},1)$.

	We can then compute $A$ in polynomial time using $C_{\phi_g}$. To do this we take a $\{0,1\}^*$-program $Q$, which on input $w \in \{0,1\}^*$ computes the first $c|w|^a$ symbols of $g(\infty)$ by computing the first $c|w|^a$ binary digits of $\phi_g$. The device then has the word $g(|w|)$ written on its tape, from which it can compute polynomially in $|w|$ whether $w \in A$. Therefore $A \in P_{CPCS}$ and thus $P_{CPCS} \supseteq P/poly$.

	$(\subseteq)$ Conversely let $C = (X,\Pi,\mathcal{T},x_0)$ be a CPCS, $C$ can be described entirely by a finite set of real numbers as follows. The multi-variable polynomial function with rational powers $F$ in Definitions~\ref{PTD} is defined entirely by the finite set of numbers $\{I,r_1,\hdots,r_I,q_{11},\hdots,q_{Im}\}$, consequently any classical transformation is defined entirely by some finite set of numbers. Similarly, balls in $\mathbb{R}^m$ are defined by finitely many real numbers that describe the coordinates of their centres and radii. Any classically measurable partition $\alpha$ in $\Pi$ is defined entirely by its finite construction from balls and classical transformations, therefore there is a finite set of real numbers $\{y_1,\hdots,y_l\}$ and a finite word $v_\alpha$ describing $\alpha$'s construction from them.
	Since $\Pi$ and $\mathcal{T}$ are a finite sets they can then also be defined in terms of a finite set of real numbers, we can similarly write the coordinates of $x_0$ as a finite set of reals. Denote the combined set of all these numbers by $D_C$ and the word describing $C$'s construction from them $v_C$.

	In general the way we obtain the digits of the elements of $D_C$ is by expanding or contracting the configuration space, and though the transformations of $\mathcal{T}$ may not commute, the digits that we can \textsl{at best} determine by a given finite sequence of transformations does not change with rearranging the order their application. This is because the digits obtainable instead depend on the rate of expansion and contraction given by the rational powers of the transformations' components. Let $p(n,k)$ be the number of ways in which we can write $n \in \mathbb{N}$ as a sum of $k$ non-negative integers, it can be shown that $p(n,1)=1$ and $p(n,k)=\sum_{l=1}^n p(l,k-1)$ for any $k \in \mathbb{N} \setminus \{0,1\}$. If the order does not matter, the number of distinct ways in which we can apply $n$ transformations from $\mathcal{T}$ is $p(n,|\mathcal{T}|)$, which grows polynomially in $n$. Therefore, the number of digits obtainable from the elements of $D_C$ grows polynomially in $n$.

	For any given polynomial function $p$, we can then encode $D_C$ as a single advice function $f_{C,p}$ which on input $n$ gives $v_C$ followed by the signed binary digits \cite{weihrauch2012computable} of each of the elements of $D_C$ that can be at best determined after $p(n)$ transformations and measurements of $C$.
	If $A$ is computable in time $O(p(n))$ using $C$, then through the polynomial advice function $f_{C,p}$ we have that $A \in P/poly$.
\end{proof}

\section{Timed Computation Systems}
\setcounter{ThmNum}{0}
One problem with our framework for computation systems thus far is that it does not adequately account for how the time a measurement takes to be carried out may vary. Indeed, as discussed by Beggs and Tucker \cite{beggs2013oracles,beggs2014analogue,beggs2012impact} the time it takes to measure a quantity may depend on the quantity itself. 

Hence we require that our computation systems be equipped with a function that describes how long it takes for a measurement to be carried out in terms of the current configuration of the computation system.

\begin{definition}
	A \textsl{timed computation system} is a 5-tuple $\mathcal{C} = (X,\Pi,\mathcal{T},x_0,\kappa)$ where $(X,\Pi,\mathcal{T},x_0)$ is a computation system and $\kappa: \Pi \times X \rightarrow \mathbb{N} \cup \{\infty\}$. We call $\kappa$ the \textsl{measurement time function} of $\mathcal{C}$.
	
\end{definition}

The value of $\kappa(\alpha,x)$ then describes how many time steps of a Turing machine it takes to measure $x$ for the partition $\alpha$. A value of $\infty$ means that this measurement takes forever.
Like with a computation system, a device acting on timed computation system is allowed access to a Turing machine. The way in which such a device acts is described as follows.

\begin{definition}
	Let $\mathcal{A}$ be an alphabet, let $\mathcal{C} = (X,\Pi,\mathcal{T},x_0,\kappa)$ be a timed computation system and let $S$ be a set of states. A \textsl{time-aware $\mathcal{A}^*$-program} $\mathcal{Q}$ on $\mathcal{C}$ and $S$ is a finite set of rules which describes how a device acts on $\mathcal{C}$ and Turing machine $TM(\mathcal{A})$.

	Each rule in $\mathcal{Q}$ takes the form $(s_i,\alpha,A,s_j,U)$ where $s_i \in S \setminus \{s_a,s_r\}$, $\alpha \in \Pi \cup \{R(\mathcal{A})\}$, $A \in \alpha \cup \emptyset$, $s_j \in S$ and $U \in \Pi \cup \mathcal{T} \cup W(\mathcal{A})$, this can be read as ``if the device is in state $s_i$ and is known to be within the subset $A$ of the partition $\alpha$ then go to state $s_j$ and commence action $U$". 

	Consider a rule of the form $(s_i,R(\mathcal{A}),A,s_j,U)$, then, as in an $\mathcal{A}^*$-program, this rule will be applied to a device in a configuration of $(x,w,s_i)$ if $w \in A$. Similarly if the action $U$ is in $\mathcal{T} \cup W(\mathcal{A})$ then a rule with this action is applied exactly if it were in an $\mathcal{A}^*$-program. Applying the rule $(s_i,\alpha,A,s_j,U)$ to a device at $(x,w,s_i)$ results in it becoming $(U(x),w,s_j)$ if $U \in \mathcal{T}$ and $(x,U(w),s_j)$ if $U \in W(\mathcal{A})$. Applying such a rule takes 1 time step.

	If the action $U$ is in $\Pi$ then it is a measurement on $X$ which may take more than a single time step to perform. When the device is at configuration $x \in X$ then the measurement takes $\kappa(U,x)$ time steps to carry out. In the mean time the device may act on its Turing tape and take single time step measurements of the tape. In this interim a rule beginning with $(s,U,A)$ can be applied if $A = \emptyset$. If the device commences a transformation or begins a new measurement on $X$ whilst the measurement of $U$ is being carried out then the process is disrupted and is ceased. If $x \in B$ for some $B \in U$, then once the measurement of $U$ is completed the output $B$ is temporarily recorded and rules beginning with $(s,U,B)$ can be applied. When the device begins a new measurement or commences a transformation on $X$, then the output $B$ is forgotten.

	In any of the above cases, if no rule that can be applied exists then the internal state becomes $s_r$. If the internal state becomes either $s_a$ or $s_r$ then the device halts.
	In order for the above process to be deterministic we once again require that, for any $(s_i,\alpha,A,s_j,T),(s_k,\beta,B,s_l,U) \in \mathcal{Q}$, if $s_i = s_k$ then $\alpha = \beta$, and if $(s_i,\alpha,A) = (s_k,\beta,B)$ then $(s_j,T) = (s_l,U)$.
\end{definition}

Just like for computation systems, we say that a problem $A \subseteq \mathcal{A}^*$ is computable using a timed computation system $\mathcal{C}$ if there exists time-aware $\mathcal{A}^*$-program $\mathcal{Q}$ on $\mathcal{C}$ that computes it. Similarly $A$ is computable using $\mathcal{C}$ in polynomial if some such $\mathcal{Q}$ has a polynomial time function.

\section{Timed Classical Physical Computation Systems, $P/poly$ and $P/log\star$}
\setcounter{ThmNum3}{0}
As in \cite{beggs2013oracles} consider a precise set of scales, we can measure whether a mass $x$ is greater or less than a given mass $y$ by placing these masses in each basket of the scale. Below each basket we place a detector which will signal if the above basket touches it. This measurement can then be described as a measurement of the partition $\alpha_y = \{[0,y],(y,\infty)\}$. Clearly, the closer the mass $x$ is to the mass $y$, the longer it will take for the scale to tip down on its heavier side onto the detector below. Indeed if $x = y$ the scales should never tip, and hence the measurement will take an infinite amount of time. In \cite{beggs2013oracles} Beggs \emph{et al.} suggest that measurement time for $\alpha_y$ on $x$ should be given by: $$\frac{d}{\sqrt{|x-y|}} < \kappa(\alpha_y,x) < \frac{c}{\sqrt{|x-y|}},$$ for some constants $c,d \in (0,\infty)$. Indeed in \cite{beggs2012impact} Beggs \emph{et al.} demonstrate that the time take for all measurements utilising a cannon and a smooth wedge should be bounded above and below by an inverse polynomial function of the distance between the position of the height of the cannon (which is what is being measured) and the cusp of the wedge.

This leads us to the following suggestion of what a reasonable requirement on the measurement time function of a CPCS should be.

\begin{definition}
	Let $\partial(A)$ be the boundary of a set $A \subseteq X$ in the Euclidean metric, the set of boundary elements of a partition $\alpha$ is then $\partial(\alpha) = \bigcup_{A \in \alpha} \partial(A)$.

	The measurement time function $\kappa$ on a classical physical computation system $C = (X,\Pi,\mathcal{T},x_0)$ takes \textsl{inverse polynomial measurement time} for any $\alpha \in \Pi$ if there exists a strictly increasing polynomial function $p: \mathbb{N} \rightarrow \mathbb{N}$ such that for any $\mathbf{x} \in X$: $$\kappa(\alpha,\mathbf{x}) = p\left(\left\lceil\frac{1}{\inf_{\mathbf{y} \in \partial(\alpha)} |\mathbf{x} - \mathbf{y}|}\right\rceil\right).$$
\end{definition}

\begin{definition}
	A timed computation system $\mathcal{C} = (X,\Pi,\mathcal{T},x_0,\kappa)$ is a \textsl{timed classical physical computation system} (TCPCS) if $(X,\Pi,\mathcal{T},x_0)$ is a classical physical computation system and $\kappa$ takes at least inverse polynomial measurement time.
\end{definition}

Denote the class of problems computable using a TCPCS in polynomial time by $P_{TCPCS}$. It so happens though that TCPCS's have the same computation power in polynomial time as CPCS's.

\begin{theorem}
	$P_{TCPCS} = P/poly$
\end{theorem}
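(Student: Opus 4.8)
The plan is to prove the two inclusions $P/poly \subseteq P_{TCPCS}$ and $P_{TCPCS} \subseteq P/poly$ separately, in each case leaning on Theorem~\ref{T1} and its proof. For the first inclusion I would re-use the construction behind Theorem~\ref{T1}, but modified so that the configurations being measured always stay bounded away from the boundary of the partition, so that the inverse polynomial measurement time condition never forces a slow measurement. For the second inclusion I would extend the advice-encoding half of the proof of Theorem~\ref{T1}; the only genuinely new ingredient is a bound on how much information the \emph{timing} of a measurement can leak.

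For the $(\supseteq)$ direction, fix $A \in P/poly = P_{CPCS}$ and, using Proposition~\ref{PE} as in the proof of Theorem~\ref{T1}, let $g$ be a polynomially growing prefix advice function helping to compute $A$, with $g(\infty) = b_1 b_2 \cdots$. The difficulty with feeding $g(\infty)$ straight into the system $C_{\phi_g}$ of Example~\ref{NES} is that the iterates of its transformation may come arbitrarily close to the boundary $\{0,\tfrac{1}{2}\}$ of $\alpha$, so that the inverse polynomial measurement time condition makes those measurements take super-polynomially long. I would avoid this by re-encoding the advice in base $4$ using only the digits $1$ and $2$: set $\phi = \sum_{i \geqslant 1}(b_i+1)4^{-i}$ and read its digits using the classically constructable transformation $T$ given by $T(x)=4x-1$ on $[0,\tfrac{1}{2})$ and $T(x)=4x-2$ on $[\tfrac{1}{2},1)$. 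One checks that $\phi$ and all of its iterates lie in $[\tfrac{1}{3},\tfrac{2}{3}]$ and that the $i$-th digit is recovered by a measurement of $\alpha = \{[0,\tfrac{1}{2}),[\tfrac{1}{2},1)\}$ at a point at distance at least $\tfrac{1}{12}$ from $\partial(\alpha)$. Hence, choosing $\kappa$ to take inverse polynomial measurement time, every measurement costs at most a constant number of steps, so reading the digits of $g(|w|)$ and then deciding membership of $A$ on the tape runs in polynomial time, giving $A \in P_{TCPCS}$.

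For the $(\subseteq)$ direction I would follow the converse half of the proof of Theorem~\ref{T1} as far as possible: a TCPCS $\mathcal{C}$ is again described by a finite word $v_{\mathcal{C}}$ together with a finite set $D_{\mathcal{C}}$ of reals, now also recording the finite data that specifies $\kappa$, and the combinatorial $p(n,k)$ counting still shows that applying at most $p(n)$ transformations can determine only polynomially many digits of the elements of $D_{\mathcal{C}}$. The new point is the timing channel: while a measurement of $\alpha$ at a configuration $\mathbf{x}$ is running, the device may count steps on its tape and so learn $\kappa(\alpha,\mathbf{x})$ exactly. However, a computation halting within the budget $p(n)$ can afford such a measurement only if $\kappa(\alpha,\mathbf{x}) \leqslant p(n)$, and by the inverse polynomial measurement time condition this forces $\lceil 1/\inf_{\mathbf{y}\in\partial(\alpha)}|\mathbf{x}-\mathbf{y}|\rceil$ to be polynomially bounded, so that $\mathbf{x}$ lies at distance at least $1/\mathrm{poly}(n)$ from $\partial(\alpha)$. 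Thus each completed measurement lands in a partition element and yields a timing value determined by $\mathbf{x}$ to within $O(\log n)$ bits, all computable from $D_{\mathcal{C}}$ at the polynomial precision already available. Encoding $v_{\mathcal{C}}$ together with the signed-binary digits of $D_{\mathcal{C}}$ obtainable after $p(n)$ steps as a single polynomially growing advice function $f_{\mathcal{C},p}$, a polynomial-time Turing machine can then simulate $\mathcal{Q}$ on any input of length $n$, giving $A \in P/poly$.

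I expect the main obstacle to be exactly this timing analysis. One must argue that the information revealed by how long a measurement takes cannot exceed polynomially many bits over the whole run, and this rests squarely on the inverse polynomial \emph{lower} bound on $\kappa$: it is precisely the guarantee that resolving a measurement within distance $\delta$ of the boundary costs time polynomial in $1/\delta$ that prevents the device from reading more than $O(\log n)$ bits per measurement inside its polynomial time budget. Equally, one must ensure that $\kappa$ is recoverable from the finite data in $D_{\mathcal{C}}$ so that the simulating machine can reproduce the timings, and then verify that the ``digits obtainable'' estimate of Theorem~\ref{T1} survives the addition of this logarithmic per-measurement overhead; reconciling these two requirements is the crux of the argument.
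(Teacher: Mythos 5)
Your $(\supseteq)$ direction is correct and is essentially the paper's own construction: the paper keeps all iterates away from the measurement boundaries by interleaving the digit $2$ into a ternary expansion, while you do it with base-$4$ digits restricted to $\{1,2\}$; either way every measured configuration stays a constant distance (the paper gets $1/27$, you get $1/12$) from $\partial(\alpha)$, so each measurement costs $O(1)$ steps and the advice $g(\infty)$ is read off in polynomial time. (One small repair: as written your $T$ is not a self-map of $[0,1)$, since $4x-1<0$ for $x<1/4$; extend it by, say, the identity outside $[\frac{1}{3},\frac{2}{3}]$ so that it is classically constructable with domain all of $X$.)

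The genuine gap is in the $(\subseteq)$ direction, at exactly the step you yourself call the crux. You assert that each completed measurement's timing value is ``computable from $D_{\mathcal{C}}$ at the polynomial precision already available''. It is not, in general: the timing is $\kappa(\alpha,\mathbf{x})=p_\alpha\left(\left\lceil 1/\inf_{\mathbf{y}\in\partial(\alpha)}|\mathbf{x}-\mathbf{y}|\right\rceil\right)$, and to reproduce it exactly --- which the simulating Turing machine must do, since the device can count interim steps and branch on the count --- one has to decide exactly whether the distance from $\mathbf{x}$ to $\partial(\alpha)$ lies above or below each threshold $1/k$. When that distance sits at, or exponentially close to, such a threshold, no polynomial number of signed-binary digits of the defining data determines the ceiling, even though $\mathbf{x}$ is far from $\partial(\alpha)$ and the measurement itself is fast. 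So the timing channel is not a low-information quantity that the advice-reading machine can recompute; it is a family of exact threshold comparisons, of precisely the same character as the exact partition measurements that give untimed CPCS's their power in Theorem~\ref{T1}. The paper closes this gap not by recomputing timings but by reifying them as measurements: it constructs an untimed CPCS $C'$ on $X\times\mathbb{R}$, with counter maps $f^+,f^-$ and ball-enlarging maps $h_{\mathbf{z}}$, in which every timing partition $\beta_{\alpha,n}$ is classically measurable and determinable in polynomially many steps; this yields $P_{TCPCS}\subseteq P_{CPCS}$, and Theorem~\ref{T1} is then invoked as a black box. Your direct advice-encoding route could plausibly be repaired in the same spirit (treat the threshold bits as additional ``obtainable digits'' to be covered by the counting argument), but as written the simulation step fails, and with it the inclusion $P_{TCPCS}\subseteq P/poly$.
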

%
%
\begin{proof}
	$(\supseteq)$ For any problem $A \subseteq \{0,1\}^*$ such that $A \in P/poly$, let $g: \mathbb{N} \rightarrow \{0,1\}^*$ be a polynomially growing prefix advice function that can be used to help compute $A$, that is $|g(n)| \leqslant cn^a$ for some $c,N,a \in \mathbb{N}$ any $n > N$.
	Define $\mathcal{C}_g = ([0,1),\{\alpha\},\{T\},\psi_g,\kappa)$ to be a TCPCS where $\alpha = \{[0,\frac{1}{3}),[\frac{1}{3},\frac{2}{3}),[\frac{2}{3},1)\}$, and $T(x) = 3x - \lfloor 3x \rfloor$. The number $\psi_g$ has a ternary expansion that consists of the infinite word $g(\infty)$ with an extra $2$ inserted after every digit. Also $\kappa(\alpha,x) = \max_{k \in \{0,1,2,3\}}|x-\frac{k}{3}|^{-1}$, for each $x \in X$.

	We have constructed $\psi_g$ such that every other digit is a $2$ in order to ensure that $T^l(\psi_g)$, for any $l \in \mathbb{N}$, is always bounded away from the boundary of each element of $\alpha$. We will now consider separately the case where $l$ is even and the case where $l$ is odd. Suppose that $l$ is even, then $T^l(\psi_g) \in [\frac{k}{3},\frac{k+1}{3})$ where $k = 0$ or 1, as any even digit of $\psi_g$ is always a 0 or a 1. Further more, $T^l(\psi_g) \not\in [\frac{k}{3},\frac{k+1}{3}-\frac{1}{9})$ as any odd digit of $\psi_g$ is a 2 and so $T^{l+1}(\psi_g) \in [\frac{2}{3},1)$. Similarly $T^l(\psi_g) \not\in [\frac{k+1}{3}-\frac{1}{27},\frac{k+1}{3})$ as $T^{l+2}(\psi_g) \in [0,\frac{2}{3})$. Therefore if $l$ is even $|T^l(\psi_g) - \frac{k}{3}| > \frac{1}{27}$ for any $k \in \{0,1,2,3\}$. Now if $l$ is odd then $T^{l}(\psi_g) \in [\frac{2}{3},1)$, further more $T^l(\psi_g) \not\in [\frac{8}{9},1)$ since $T^{l+1}(\psi_g) \in [0,\frac{2}{3})$, and $T^l(\psi_g) \not\in [\frac{2}{3},\frac{20}{27})$ since $T^{l+2}(\psi_g) \in [\frac{2}{3},1)$. Therefore $|T^l(\psi_g) - \frac{k}{3}| > \frac{2}{27}$ for any $k \in \{0,1,2,3\}$. Hence for any $T^l(\psi_g)$ the time taken to perform an $\alpha$ measurement on it is at most $\frac{1}{1/27} = 27$ time steps.

	We can then compute $A$ in polynomial time using $\mathcal{C}_g$ via a time-aware $\mathcal{A}^*$-program $\mathcal{Q}$ similar to the program $Q$ in the proof of Theorem~\ref{T1}. The program $\mathcal{Q}$ computes the first $cn^a$ symbols of $g(\infty)$ by determining the first $2cn^a$ ternary digits of $\psi_g$ in $\mathcal{C}_g$, it then computes $A$ in polynomial time using this advice. As each transformation takes 1 step and each measurement takes at most 27 steps, and so to determine the first $2cn^a$ ternary digits of $\psi_g$ takes at most $56cn^a$ steps, which is polynomial in $n$.

	$(\subseteq)$ Conversely, let $\mathcal{C} = (X,\Pi,\mathcal{T},x_0,\kappa)$ be a TCPCS. Then a CPCS defined to by $C = (X,\Pi,\mathcal{T},x_0)$ can compute any problem as quickly as $\mathcal{C}$, unless $\mathcal{C}$ is able to obtain some information by performing a measurement $\alpha$ on some $x \in X$ and counting how many steps the measurement took. Hence we need to define a CPCS in which every partition of the form: $$\beta_{\alpha,n} = \{\{x \in X \text{ } | \text{ } \kappa(\alpha,x) < n\},\{x \in X \text{ } | \text{ } \kappa(\alpha,x) \geqslant n\}\},$$ can be determined in $q(n)$ times steps for some polynomial function $q$. 

	Let $C' = (X \times \mathbb{R},\Pi',\mathcal{T}' \cup \mathcal{U} \cup \mathcal{V} \cup \{f^+,f^-\},(x_0,0))$ be a CPCS. For every open ball $B_\epsilon(\mathbf{x})$ used in the construction of some classically measurable set, in some partition of $\Pi$, the set $\Pi'$ contains the partition $\{B_\epsilon(\mathbf{x}) \times \mathbb{R},(X \times \mathbb{R}) \setminus (B_\epsilon(\mathbf{x}) \times \mathbb{R})\}$. Similarly for every closed ball, used in some classically measurable set construction, $\Pi'$ contains $\{\overline{B}_\epsilon(\mathbf{x}) \times \mathbb{R},(X \times \mathbb{R}) \setminus (\overline{B}_\epsilon(\mathbf{x}) \times \mathbb{R})\}$. We can extend any $T \in \mathcal{T}$ to a classical transformation $T'$ on $X \times \mathbb{R}$ such that $T'(\mathbf{x},y) = (T(\mathbf{x}),y)$ for any $(\mathbf{x},y) \in X \times \mathbb{R}$. So let $T' \in \mathcal{T}'$, if $T \in \mathcal{T}$. Let $U' \in \mathcal{U}$ if $U$ is a classical transformation used in the construction of an element of a partition of $\Pi$. Hence through $\Pi'$ and $\mathcal{U}$ a device acting on $C'$ can perform any measurement of $\Pi$ in a fixed finite number of time steps. For any $(\mathbf{x},y) \in X \times \mathbb{R}$ we let $f^+(\mathbf{x},y) = (\mathbf{x},y+1)$ and $f^-(\mathbf{x},y) = (\mathbf{x},y-1)$. For every ball of centre $\mathbf{z}$ in a partition of $\Pi'$ we have a transformation $h_\mathbf{z} \in \mathcal{V}$ which is such that for any $(\mathbf{x},y) \in X \times \mathbb{R}$: $$h_\mathbf{z}(\mathbf{x},y) = \left(\mathbf{x}-\frac{\mathbf{x}-\mathbf{z}}{y|\mathbf{x}-\mathbf{z}|},y\right).$$
	Applying $h_\mathbf{z}$ to the ball $B_\epsilon(\mathbf{z})$ results in $B_{\epsilon+\frac{1}{y}}(\mathbf{z})$. Finally, as $\kappa$ takes inverse polynomial measurement time for any $\alpha \in \Pi$, the set of $x \in X$ such that $\kappa(\alpha,x) < p(n)$, corresponds to the set of elements at least a distance of $\frac{1}{n}$ from the boundaries of $\alpha$. We can therefore determine $\beta_{\alpha,n}$ by constructing it from balls of the form $B_{\epsilon+\frac{1}{y}}(\mathbf{z})$ and the transformations of $\mathcal{U}$.

	Therefore, any problem computable in polynomial time by $\mathcal{C}$ is computable in polynomial time by $C'$ and so $P_{TCPCS} \subseteq P_{CPCS}$. Consequently $P_{TCPCS} \subseteq P/poly$ by Theorem~\ref{T1}. 
\end{proof}

The above result is somewhat contradictory to the suggested computational power of $P/log\star$ for classical physical systems with infinite precision given by Beggs \emph{et al.} in \cite{beggs2014analogue}, since from \cite{beggs2008oracles} we know that $P/log\star \subsetneq P/poly$. There is a remedy though, as the computational power in $\mathcal{C}_g$ above came from its initial configuration, not its measurements. If the initial configuration of $\mathcal{C}_g$ was an algebraic number\footnote{We denote the set of real algebraic numbers by $\mathbb{A}$.} then its digits would be computable in polynomial time on a Turing machine and thus a device acting on $\mathcal{C}_g$ would not be computationally more powerful. However, we would still be able to obtain the same computational power as $\mathcal{C}_g$ above if we were able to apply a transformation that mapped an algebraic element of $[0,1)$ to an arbitrary real number $\phi_g$. We thus have the following restriction.

\begin{definition}
	A timed classical physical computation system $\mathcal{C} = (X,\Pi,\mathcal{T},x_0,\kappa)$ with $X \subseteq \mathbb{R}^m$, is \textsl{algebraically acting} if $x_0 \in \mathbb{A}^m$, and every $T \in \mathcal{T}$ is classical transformation defined on the whole of $X$ such that the coefficients of every multi-variable polynomial function with rational powers used in its construction are all in $\mathbb{A}$.
\end{definition}

Denote the class of problems computable using an algebraically acting TCPCS in polynomial time by $P_{ATCPCS}$. This restriction is arguably natural as square matrices with algebraic coefficients are examples of algebraically acting classical transformations. So are rational rotations, as $\cos(\theta)$ and $\sin(\theta)$ are algebraic numbers if $\theta$ is a rational multiple of $\pi$.

It should be noted that if we were to restrict our systems to having only transformations with algebraic coefficients whilst allowing for the initial configuration $x_0$ to be an arbitrary element of $\mathbb{R}^m$, the result is a computation system with the same computational power in polynomial time as general TCPCS's. This is because such a device is able to use the coordinates of $x_0$ as a resource of arbitrary real coefficients, effectively allowing the device to carry out arbitrary classical transformations. However, if we restrict the initial configuration as well, we have the following result.

\begin{theorem}\label{TAIT}
	$P_{ATCPCS} = P/log\star$.
\end{theorem}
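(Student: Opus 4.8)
The plan is to prove the two inclusions separately, with both resting on a single quantitative observation: once the system is algebraically acting, the initial configuration and every transformation are computable, so the \emph{only} non-uniform resource left is the position of the partition boundaries, and the inverse polynomial measurement time strictly limits how finely those boundaries can be resolved in polynomial time. (This is exactly the feature that is absent in the untimed setting: with free measurements one could binary-search to a boundary and extract polynomially many digits, recovering $P/poly$; the timing is what collapses this to a logarithmic resource.)

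For $(\supseteq)$ I would mimic the $\psi_g$ construction used above, but move the advice out of the initial configuration (which must now lie in $\mathbb{A}^m$) and into a partition boundary. Given $A \in P/log\star$ with a logarithmically growing prefix advice $g$, $|g(n)| \leqslant c\log n$, I would take a real $\beta \in (0,1)$ whose base-$b$ expansion encodes $g(\infty)$ with separator digits inserted exactly as the extra $2$'s are inserted to build $\psi_g$, and set up a TCPCS on $[0,1)$ whose single partition has $\beta$ among its boundary points, with an algebraic starting point and shift-type transformations $x \mapsto bx - \lfloor bx\rfloor$ together with the ``prepend a digit'' maps, all of which have rational, hence algebraic, coefficients. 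A time-aware program then searches for $\beta$ by navigating to successive algebraic query points and timing the resulting measurements; the separator digits guarantee that the query points used to read the first $k$ digits stay bounded away from $\beta$ by a fixed function of $k$, so each such measurement completes in at most $p(b^{O(k)})$ steps. Reading the first $O(\log n)$ digits of $\beta$ thus costs only polynomially many steps, recovers $g(n)$, and lets the device finish by the polynomial-time algorithm witnessing $A \in P/log\star$.

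For $(\subseteq)$, let $\mathcal{C} = (X,\Pi,\mathcal{T},x_0,\kappa)$ be an algebraically acting TCPCS computing $A$ in time $t(n) = O(\mathrm{poly}(n))$, and let $\beta_1,\dots,\beta_r$ be the finite list of reals occurring as ball centres and radii in the construction of the partitions of $\Pi$. Since $x_0$ and every transformation coefficient are algebraic, the configuration reached after any sequence of transformations is a computable algebraic number a Turing machine can track exactly, so the $\beta_i$ are the sole source of non-uniform information and can be accessed only through measurements. The key estimate is that $\kappa$ takes at least inverse polynomial measurement time, $\kappa(\alpha,\mathbf{x}) = p(\lceil 1/\inf_{\mathbf{y}\in\partial(\alpha)}|\mathbf{x}-\mathbf{y}|\rceil)$, so any measurement that completes within the budget $t(n)$ is performed at a point whose distance to $\partial(\alpha)$ is at least $1/p^{-1}(t(n)) = 1/\mathrm{poly}(n)$, whereas any measurement the device interrupts before completion can only certify that its point lies within $1/\mathrm{poly}(n)$ of $\partial(\alpha)$. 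Either way a single measurement localizes the nearest $\beta_i$ to precision $1/\mathrm{poly}(n)$, i.e. to $O(\log n)$ leading signed-binary digits, and polynomially many measurements cannot beat this resolution. I would then package these digits as an advice function $f$ that on input $n$ outputs a fixed description of $\mathcal{C}$ followed by the first $\lceil c\log n\rceil$ signed-binary digits \cite{weihrauch2012computable} of $\beta_1,\dots,\beta_r$, interleaved round-robin; because these are leading digits of fixed reals, increasing $n$ only extends each block, so $f$ is a prefix function with $|f(n)| = O(\log n)$, that is $f \in log\star$. A polynomial-time Turing machine with advice $f$ then simulates the device step for step, computing from the advice digits and the exactly-known current point both the outcome and the value of each measurement time needed to reproduce the device's timing and timeouts, giving $A \in P/log\star$.

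The $(\supseteq)$ direction is essentially the $\psi_g$ construction relocated to a boundary and should be routine once the padding is fixed. I expect the real work, and the main obstacle, to be in $(\subseteq)$: proving that interrupted measurements leak no more than completed ones, that combining polynomially many coarse measurements still cannot beat $1/\mathrm{poly}(n)$ resolution of a boundary, and that a Turing machine can recompute each $\kappa(\alpha,\mathbf{x})$ — including the distance infimum over $\partial(\alpha)$ in $\mathbb{R}^m$ — from only $O(\log n)$ advice digits to exactly the precision required to mirror the device. Arranging the prefix round-robin encoding so that it simultaneously meets the $log\star$ length bound and supports faithful simulation is the crux that distinguishes this result from the $P/poly$ theorems above.
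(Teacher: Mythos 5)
Your overall architecture is the same as the paper's: for $(\supseteq)$ the advice is planted in a partition boundary and extracted by timed measurements, and for $(\subseteq)$ the algebraic data is treated as uniform, the boundary reals as the only non-uniform resource, and the inverse-polynomial timing as the bound that caps extraction at $O(\log n)$ digits, packaged as prefix advice. Within $(\supseteq)$ your mechanism differs from the paper's in an interesting way: the paper (system $\mathcal{D}_g$ with $\alpha_g = \{[0,\phi_g],(\phi_g,1]\}$, $\phi_g = 0.g(\infty)$) does a plain binary search whose query points get exponentially close to $\phi_g$, accepting measurement costs of order $2^l$ for the $l$-th digit — harmless since only $O(\log n)$ digits are needed — and it additionally uses a timeout inference (if the measurement has not returned after $2^{l+1}$ steps, the next digit is deducible), whereas your separator padding keeps every query point at distance $\Omega(b^{-2k})$ from $\beta$ so that every measurement terminates quickly. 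Both yield polynomial time, and yours avoids the paper's edge case of a query point coinciding with $\phi_g$. However, your construction as stated violates the definition of ``algebraically acting'': the shift $x \mapsto bx - \lfloor bx\rfloor$ is only classically \emph{constructable} (piecewise), not a classical transformation defined on the whole of $X$, which is what that definition demands of every $T \in \mathcal{T}$. You must drop it and work with the prepend maps $x \mapsto (x+m)/b$ together with a constant reset map (the paper's $R(x) = \tfrac{1}{2}$ plays exactly this role); the reset is also needed for correctness, since prepending onto the current configuration perturbs the query point by a tail of size up to $b^{-(2k+1)}$, which can exceed your separator gap.

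The genuine gap is in $(\subseteq)$, and it is precisely the step you defer as ``the crux'': showing that the device cannot beat the $1/\mathrm{poly}(n)$ resolution bound. The dangerous attack is not repeated coarse measurements but the use of transformations in $\mathcal{T}$ to \emph{amplify} precision: a priori, some $T$ could map configurations near a boundary away from it while preserving which partition element they lie in, so the device could measure quickly, far from the boundary, yet still learn on which side of the boundary a very nearby point sits — extracting many digits cheaply and collapsing the argument back to $P/poly$. The paper closes exactly this hole: such a $T$ must, by continuity, have a fixed point at the boundary point $\phi$, and since $T$ is a classical transformation with algebraic coefficients, $\phi$ is then a root of an algebraic-coefficient expression (e.g.\ of $T(x)-x$), hence algebraic, hence computable in polynomial time by the Turing machine with no advice at all; so any boundary that could be amplified contributes no non-uniform information. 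This fixed-point/algebraicity argument (or some substitute, such as showing that the distance-scaling of any differentiable classical transformation cancels against the measurement-time gain) is the one lemma-level ingredient your proof needs and does not contain; the rest of your $(\subseteq)$ — the round-robin prefix encoding, which is in fact spelled out more carefully than the paper's, and the step-by-step simulation — matches the paper's argument.
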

%
%
\begin{proof}
	$(\supseteq)$ For any problem $A \subseteq \{0,1\}^*$ such that $A \in P/log\star$, let $g: \mathbb{N} \rightarrow \{0,1\}^*$ be a logarithmically growing prefix advice function that can be used to help compute $A$, say for some $c,N \in \mathbb{N}$ any $n > N$, $|g(n)| \leqslant c\log(n)$. Define $\mathcal{D}_g = ([0,1],\{\alpha_g\},\{T_0,T_1,R\},\frac{1}{2},\kappa)$ to be an algebraically acting TCPCS where $\alpha_g = \{[0,\phi_g],(\phi_g,1]\}$ with the number $\phi_g$ having a binary expansion of $0.g(\infty)$. The transformations are $T_0(x) = \frac{x}{2}$, $T_1(x) = \frac{x+1}{2}$, and  $R(x) = \frac{1}{2}$ for any $x \in [0,1)$. The measurement time function is such that $\kappa(\alpha_g,x) = |x-\phi_g|^{-1}$ for each $x \in [0,1)$.

	Let $y \in [0,1)$ have a binary expansion of $0.b_1b_2\hdots$. Applying $T_0$ to $y$ gives a number with a binary expansion of $0.0b_1b_2\hdots$, and applying $T_1$ to $y$ gives a number with a binary expansion of $0.1b_1b_2\hdots$. Hence by repeated applications of $T_0$ and $T_1$ to $0$ we can approximate any number in $[0,1)$ with a finite binary expansion.

	We can therefore determine $\phi_g$ to arbitrarily many places in $\mathcal{D}_g$. To see this, suppose that we know that the first $l$ binary digits of $\phi_g$ are $a_1\hdots a_l$, we can then generate a number $z_l$ with a binary expansion of of $0.a_1\hdots a_l1$ and apply the measurement $\alpha_g$ to it. If we learn that $z_l \in (\phi_g,1]$ then we know that $\phi_g < z_l$ and so the first $l+1$ binary digits of $\phi_g$ are $a_1\hdots a_l 0$. If we learn that $z_l \in [0,\phi_g]$ then we know that $\phi_g \geqslant z_l$ and hence the first $l+1$ binary digits of $\phi_g$ are $a_1\hdots a_l 1$. Additionally, if $\kappa(\alpha_g,z_l) > 2^{l+1}$ then we know that $|z_l-\phi_g| \leqslant 2^{-(l+1)}$ and thus the first $l+1$ binary digits of $\phi_g$ must be $a_1\hdots a_l1$. Once we know this, we can reset the system to $\frac{1}{2}$ using $R$ and generate a new number with a more accurate binary expansion. Using this process we can determine $\phi_g$ to $L$ binary places in $O(\sum_{j=1}^L 2^{j}) = O(2^L)$ time steps.

	On input $w \in \{0,1\}^*$ we can then determine $c\log(|w|)$ places of $g(\infty)$ in polynomial time using the above procedure,as $O(2^{c\log|w|}) = O(|w|^c)$. 
	Hence we can compute $A$ in polynomial time using $\mathcal{D}_g$ via a $\{0,1\}^*$-program that determines the first $c\log|w|$ symbols of $g(\infty)$ before computing $A$ using this advice.

	$(\subseteq)$ Conversely, let $\mathcal{C} = (X,\Pi,\mathcal{T},x_0,\kappa)$ be an algebraically acting TCPCS, as in the proof of Theorem~\ref{T1} we can construct a logarithmically growing prefix advice function $f_{\mathcal{C}}$ that encodes the finite description of $\mathcal{T}$, the polynomial functions that define the measurement times of each $\alpha \in \Pi$ for $\kappa$ and $x_0$ in $f_\mathcal{C}(0)$. The prefix advice function $f_\mathcal{C}$ gives $f_\mathcal{C}(0)$ followed by logarithmically many digits of each of the real numbers that are used to define the elements of the partitions in $\Pi$. 

	Then any problem $A$ computable in polynomial time using $\mathcal{C}$ is computable in polynomial time with logarithmically growing advice $f_{\mathcal{C}}$. This is because the only information obtainable from $\mathcal{C}$ that is not given immediately by $f_\mathcal{C}$ is within $\Pi$, and so in order to obtain $l$ signed binary digits of a real number used to define an element $A \in \alpha$ of some partition $\alpha \in \Pi$ the device needs to be in a configuration that is within a distance of $2^{-l}$ of the boundary of $A$. Measuring this configuration takes at least $2^l$ time steps. If a partition is useful for the computation of $A$ then we also cannot use any transformation in $\mathcal{T}$ to reliably reduce this measurement time. As for example, if $X = \mathbb{R}$ and $\alpha = \{(-\infty,\phi],(\phi,\infty)\}$ then a transformation $T$ that maps numbers around $\phi$ away from it whilst keeping these numbers within the same elements of $\alpha$ must have a fixed point at $\phi$ since by the construction $T$ is continuous. But then by the construction of $T$, the number $\phi$ must be algebraic and hence computable in polynomial time by a finite program on a Turing machine. Similarly for a general $X \subseteq \mathbb{R}^m$ if we can transform away from the boundary of some partition of $X$ using a algebraic classical transformation then that part of the boundary must be algebraically defined and hence computable in polynomial time. 
\end{proof}

We thus obtain the Beggs \emph{et al.}'s suggested computational power for infinite precision analogue-digital devices by preventing our TCPCS's from applying transcendental transformations. Indeed, a key point to take from the proof of Theorem~\ref{TAIT} is that requiring that measurements take inverse polynomial time is only a restriction to the computational power of an infinite precision classical physical system if its transformations and initial configuration are also restricted to being computable in polynomial time on $\mathbb{R}^m$ with the Euclidean topology \cite{weihrauch2012computable}. We of course chose our transformations to be constructed from multi-variable polynomial functions with rational powers as per the reasoning at the start of Section 3, we do not know of any physical justifications for allowing a more general extension. However, the TCPCS we used in the $(\supseteq)$ part of our proof of Theorem~\ref{TAIT} uses only rational coefficients, so we could further restrict our classical transformations to being only rationally acting. An extension of this result to a class of differentiable manifolds should also be possible.

\bibliographystyle{eptcs}
\bibliography{APC}
\end{document}